\newtheorem{theorem}{Theorem}[section]
\newenvironment{proof}[1][Proof:]{\begin{trivlist}
\item[\hskip \labelsep {\bfseries #1}]}{\end{trivlist}}
\newcommand{\Rmnum}[1]{\expandafter\@slowromancap\romannumeral #1@}
\long\def\symbolfootnote[#1]#2{\begingroup\def\thefootnote{\fnsymbol{footnote}}\footnote[#1]{#2}\endgroup}
\begin{document}

\title{On the Confidentiality of Information Dispersal Algorithms and Their Erasure Codes}
\author{Mingqiang Li\\
Department of Computer Science and Engineering, The Chinese University of Hong Kong\\
Shatin, New Territories, Hong Kong\\
\mbox{Email}: mingqiangli.cn@gmail.com\\
\thanks{The main part of this work was finished while Mingqiang Li worked as a Staff Researcher in the IBM China Research Laboratory.}
}

\maketitle

\begin{abstract}
\emph{Information Dispersal Algorithms (IDAs)} have been widely applied to reliable and secure storage and transmission of data files in distributed systems. An IDA is a method that encodes a file $F$ of size $L=|F|$ into $n$ unrecognizable pieces $F_1$, $F_2$, $\cdots$, $F_n$, each of size $L/m$ ($m<n$), so that the original file $F$ can be reconstructed from any $m$ pieces. The core of an IDA is the adopted \mbox{non-systematic} \mbox{$m$-of-$n$} erasure code. This paper makes a systematic study on the \emph{confidentiality} of an IDA and its connection with the adopted erasure code. Two levels of confidentiality are defined: \emph{weak confidentiality} (in the case where some parts of the original file $F$ can be reconstructed explicitly from fewer than $m$ pieces) and \emph{strong confidentiality} (in the case where nothing of the original file $F$ can be reconstructed explicitly from fewer than $m$ pieces). For an IDA that adopts an arbitrary \mbox{non-systematic} erasure code, its confidentiality may fall into weak confidentiality. To achieve strong confidentiality, this paper explores a sufficient and feasible condition on the adopted erasure code. Then, this paper shows that Rabin's IDA has strong confidentiality. At the same time, this paper presents an effective way to construct an IDA with strong confidentiality from an arbitrary \mbox{$m$-of-$(m+n)$} erasure code. Then, as an example, this paper constructs an IDA with strong confidentiality from a Reed-Solomon code, the computation complexity of which is comparable to or sometimes even lower than that of Rabin's IDA.
\end{abstract}

\begin{IEEEkeywords}
Cauchy matrix, confidentiality, erasure code, information dispersal algorithm, \mbox{Reed-Solomon} code, Vandermonde matrix.
\end{IEEEkeywords}

\section{Introduction}\label{section-intro}
\indent In 1989, Rabin \cite{IDA} proposed an attractive \emph{Information Dispersal Algorithm (IDA)} that is applicable to reliable and secure storage and transmission of data files in distributed systems. Since then, IDAs have drawn many attentions from both researchers and engineers in the area of distributed systems.\\
\indent  An IDA is a method that encodes a file $F$ of size $L=|F|$ into $n$ unrecognizable pieces $F_1$, $F_2$, $\cdots$, $F_n$, each of size $L/m$ ($m<n$), so that the original file $F$ can be reconstructed from any $m$ pieces. From a coding theorist's viewpoint, an IDA is corresponding to a \mbox{non-systematic} \mbox{$m$-of-$n$} erasure code \cite{EEC}. Here, the \mbox{non-systematic} property of the erasure code is necessary to ensure ``unrecognizable'' pieces. In practice, an IDA is implemented as follows: The original file $F$ is firstly divided into $m$ segments $S_1$, $S_2$, $\cdots$, $S_m$, each of size $L/m$. Then, the $m$ segments are encoded into $n$ unrecognizable pieces $F_1$, $F_2$, $\cdots$, $F_n$ using a \mbox{non-systematic} \mbox{$m$-of-$n$} erasure code.\\
\indent The reliability of an IDA is clear: no more than $n-m$ lost pieces of the $n$ pieces $F_1$, $F_2$, $\cdots$, $F_n$ will not result in data loss. However, the \emph{confidentiality} of an IDA is not straightforward and deserves a systematic study.\\
\indent From the view of \mbox{information-theoretic} security \cite{Shannon-perfect-secrecy}, IDAs can provide only incremental confidentiality and thus have weaker confidentiality than secret sharing (with perfect confidentiality) \cite{secret-sharing-Blakley,secret-sharing-Shamir,secret-sharing-condition} and ramp schemes (with partially perfect confidentiality) \cite{ramp-scheme-1,ramp-scheme-2}. However, IDAs can achieve optimal efficiency in data overhead \cite{IDA}. As shown in \cite[Page 65, Table A]{threshold-compare}, there is a \mbox{trade-off} between confidentiality and data overhead. Moreover, for practical applications, \mbox{information-theoretic} security is often extravagant and unnecessary. Thus, in this paper, we will study the practical security that IDAs can provide.\\
\indent Although a \mbox{non-systematic} erasure code can ensure ``unrecognizable'' pieces in an IDA, some segments may still be reconstructed explicitly from fewer than $m$ pieces. Then, an eavesdropper who acquires fewer than $m$ pieces by snooping may reconstruct some parts of the original file $F$ explicitly, resulting in partial file leakage. In the case of partial file leakage, we say the IDA has \emph{weak confidentiality}. However, for an ideal IDA, any segment of the original file $F$ should not be reconstructed explicitly from fewer than $m$ pieces. In the case where nothing of the original file $F$ can be reconstructed explicitly from fewer than $m$ pieces, we say the IDA has \emph{strong confidentiality}.\\
\indent For an IDA that adopts an arbitrary \mbox{non-systematic} erasure code, we noticed that its confidentiality may fall into weak confidentiality. In this paper, we will first show in Section~\ref{section-weak-conf} which kind of IDAs has weak confidentiality and how an eavesdropper can reconstruct some segments of the original file $F$ explicitly from fewer than $m$ pieces in the case of weak confidentiality. Then, to achieve strong confidentiality, we explore a sufficient and feasible condition for an IDA in Section~\ref{section-strong-conf}. We show that Rabin's IDA \cite{IDA} has strong confidentiality. At the same time, we present an effective way to construct an IDA with strong confidentiality from an arbitrary \mbox{$m$-of-$(m+n)$} erasure code. Then, as an example, we construct an IDA with strong confidentiality from a Reed-Solomon code \cite{RS}, the computation complexity of which is comparable to or sometimes even lower than that of Rabin's IDA. Finally, we conclude this paper in Section~\ref{section-conclusion}. To our knowledge, this paper is the first work that focuses on the issues of weak confidentiality and strong confidentiality in IDAs.\\
\indent To make our later discussion more easily understood, we begin this paper with a brief introduction of IDAs and their erasure codes.

\section{IDAs and Their Erasure Codes}
\indent In an Information Dispersal Algorithm (IDA), a \mbox{non-systematic} \mbox{$m$-of-$n$} erasure code is employed to encode the $m$ segments $S_1$, $S_2$, $\cdots$, $S_m$ into $n$ unrecognizable pieces $F_1$, $F_2$, $\cdots$, $F_n$, i.e.
\begin{equation}\label{equ-encode}
    (S_1,S_2,\cdots,S_m)\cdot G_{m\times n}=(F_1,F_2,\cdots,F_n),
\end{equation}
where $G_{m\times n}$ is the \emph{generator matrix} of the adopted erasure code and meets the following two conditions:
\begin{enumerate}
  \item Any column of $G_{m\times n}$ is not equal to any column of an $m\times m$ identity matrix; and
  \item Any $m$ columns of $G_{m\times n}$ form an $m\times m$ nonsingular matrix.
\end{enumerate}
The first condition ensures that any piece is unrecognizable; while the second condition ensures that the original file $F$ can be reconstructed from any $m$ pieces.

\section{Which Kind of IDAs Has Weak Confidentiality}\label{section-weak-conf}
\indent In this section, we will show which kind of Information Dispersal Algorithms (IDAs) have weak confidentiality and how an eavesdropper can reconstruct some segments of the original file $F$ explicitly from fewer than $m$ pieces in the case of weak confidentiality. We present a theorem as follows:
\begin{theorem}\label{th-weak}
An IDA has \emph{weak confidentiality} if and only if the adopted erasure code meets the following condition: In its generator matrix $G_{m\times n}$, there is a submatrix $A_{m'\times n'}$ of column rank $r$, where $m',n'<m$ and $n'-r=m-m'>0$.
\end{theorem}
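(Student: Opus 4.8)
The plan is to recast ``reconstructing a segment explicitly'' as a membership question in a column span, and then to read the submatrix of the statement off a dimension count that the second defining property of $G_{m\times n}$ forces to be sharp. First I would observe that an eavesdropper holding the pieces $\{F_j:j\in J\}$ knows exactly the tuple $(S_1,\dots,S_m)\,G_J$, where $G_J$ denotes the submatrix of $G_{m\times n}$ made of the columns indexed by $J$; hence the quantities the eavesdropper can compute explicitly are precisely the values $(S_1,\dots,S_m)\,v$ with $v$ ranging over $\mathcal{C}(G_J)$, the linear span of the columns of $G_J$. Consequently the segment $S_i$ can be reconstructed explicitly from $\{F_j:j\in J\}$ if and only if the $i$-th unit column vector $e_i$ lies in $\mathcal{C}(G_J)$, so the IDA has weak confidentiality if and only if $e_i\in\mathcal{C}(G_J)$ for some index $i$ and some column set $J$ with $|J|<m$. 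I would also record the one structural ingredient used below: by the second condition on $G_{m\times n}$, any $m$ of its columns are linearly independent, hence so is any set of at most $m-1$ of them.

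For the ``if'' direction I would begin with a submatrix $A_{m'\times n'}=G_{R,J}$ (rows $R$ with $|R|=m'$, columns $J$ with $|J|=n'$) of column rank $r$ satisfying $n'-r=m-m'>0$. Since $n'<m$, the $n'$ columns of $G_J$ are independent, so $\dim\mathcal{C}(G_J)=n'$. Restriction of $\mathcal{C}(G_J)$ to the coordinates in $R$ has image the column space of $A_{m'\times n'}$, which has dimension $r$; hence the subspace of $\mathcal{C}(G_J)$ all of whose $R$-coordinates vanish has dimension $n'-r=m-m'$. That subspace is contained in $\mathrm{span}\{e_i:i\notin R\}$, which also has dimension $m-m'$, so the two coincide; in particular $e_i\in\mathcal{C}(G_J)$ for every $i\notin R$. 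Since $m-m'>0$ such an $i$ exists and $|J|=n'<m$, so the segment $S_i$ is reconstructable from fewer than $m$ pieces, i.e.\ the IDA has weak confidentiality.

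For the ``only if'' direction I would begin with weak confidentiality, so $e_i\in\mathcal{C}(G_J)$ for some $i$ and some $J$ with $1\le|J|=n'<m$. The columns of $G_J$ are again independent (as $n'\le m-1$), so $\dim\mathcal{C}(G_J)=n'$; and since $e_i\in\mathcal{C}(G_J)$, deleting the $i$-th coordinate drops this dimension by exactly one, i.e.\ the $(m-1)\times n'$ submatrix $G_{\{1,\dots,m\}\setminus\{i\},\,J}$ has column rank $r=n'-1$. Taking $m'=m-1$, this submatrix satisfies $m',n'<m$ and $n'-r=1=m-m'>0$, which is exactly the asserted condition. (The case $m=1$ is vacuous: nothing can be reconstructed from zero pieces, and no submatrix of the required form exists.)

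I expect the only real subtlety to be the first step: making precise that ``reconstructing a segment explicitly'' is membership of $e_i$ in the span of the captured columns, and — more importantly — recognizing that it is exactly the nonsingularity condition on $G_{m\times n}$ that keeps the span of any fewer-than-$m$ columns at its maximal possible dimension, so that the rank bookkeeping above yields an equality of subspaces rather than merely an inclusion. Without that condition the span of the columns in $J$ could be a proper subspace missing every $e_i$, and the ``if'' direction would fail.
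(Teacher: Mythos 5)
Your proof is correct, and it takes a cleaner route than the paper's. The paper argues operationally: for sufficiency it exhibits an explicit elimination --- a linear relation $(v_1,\dots,v_k)=(v_{k+1},\dots,v_{n'})B_{r\times k}$ among the columns of $A_{m'\times n'}$, the combinations $\tilde F_{j_1},\dots,\tilde F_{j_k}$ that cancel $S_{i_1},\dots,S_{i_{m'}}$, and then an (unverified, though true) invertibility of the resulting $k\times k$ system on the remaining segments --- and for necessity it essentially says ``reverse the sufficiency argument,'' which is left informal. You instead reformulate ``$S_i$ is explicitly reconstructable from the pieces indexed by $J$'' as $e_i\in\mathcal{C}(G_J)$ and run both directions as dimension counts: in the ``if'' direction the kernel of the coordinate projection onto the rows $R$ has dimension $n'-r=m-m'$ and is therefore all of $\mathrm{span}\{e_i:i\notin R\}$ (which also silently supplies the invertibility the paper glosses over, and recovers its stronger conclusion that every segment outside $R$ leaks); in the ``only if'' direction you delete row $i$ to get a concrete witness with $m'=m-1$, $r=n'-1$, which is where your argument genuinely improves on the paper --- the paper never states the span-membership criterion, so its necessity proof is more of an appeal to plausibility, while yours is a complete argument (and the witness you produce is a special, simpler instance of the condition, which is fine since only existence is asserted). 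Both arguments use the same structural input, namely that the MDS condition forces any fewer than $m$ columns of $G_{m\times n}$ to be independent, so $\dim\mathcal{C}(G_J)=|J|$; your closing remark correctly identifies this as the hinge of the equivalence, and your handling of the degenerate case $m=1$ is a detail the paper does not address.
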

\begin{proof}
We first prove the sufficiency. Suppose $A_{m'\times n'}$ is located in rows $i_1,i_2,\cdots,i_{m'}$ and columns $j_1,j_2,\cdots,j_{n'}$ of $G_{m\times n}$. Then, $S_{i_1},S_{i_2},\cdots,S_{i_{m'}}$ are the $m'$ segments corresponding to rows $i_1,i_2,\cdots,i_{m'}$ of $G_{m\times n}$. Similarly, $F_{j_1},F_{j_2},\cdots,F_{j_{n'}}$ are the $n'$ pieces corresponding to columns $j_1,j_2,\cdots,j_{n'}$ of $G_{m\times n}$. Since $n'-r=m-m'>0$, $r=m'+n'-m$. Furthermore, since $m',n'<m$, then $r<m',n'$. Thus, $A_{m'\times n'}$ is rank deficient. Then, in $A_{m'\times n'}$, any $k=n'-r$ columns can be linearly represented by other $r$ columns. Let $A_{m'\times n'}=(v_1,v_2,\cdots,v_{n'})$, where $v_1,v_2,\cdots,v_{n'}$ are column vectors. Suppose there is a linear relation among column vectors of $A_{m'\times n'}$ as follows: \[(v_1,v_2,\cdots,v_{k})=(v_{k+1},v_{k+2},\cdots,v_{n'})\cdot B_{r\times k},\] where $B_{r\times k}$ is the transpose of coefficient matrix. Then, any information of $S_{i_1},S_{i_2},\cdots,S_{i_{m'}}$ can be eliminated by calculating
\begin{multline}\label{equ-recon}
    (\tilde{F}_{j_1},\tilde{F}_{j_2},\cdots,\tilde{F}_{j_{k}})=(F_{j_1},F_{j_2},\cdots,F_{j_{k}})\\
    -(F_{j_{k+1}},F_{j_{k+2}},\cdots,F_{j_{n'}})\cdot B_{r\times k}.
\end{multline}
Finally, according to Equation~(\ref{equ-encode}), other $m-m'=k$ segments except $S_{i_1},S_{i_2},\cdots,S_{i_{m'}}$ can be reconstructed explicitly.  \\
\indent We now prove the necessity. If some segments can be reconstructed explicitly from fewer than $m$ pieces in an IDA, it is clear that the information of other segments should be able to be eliminated from the eavesdropped pieces by linear operations. Moreover, a solvable system of linear equations on these reconstructible segments should be able to be formed. From the above proof of sufficiency, we can deduce that in the corresponding generator matrix $G_{m\times n}$, there should be a submatrix $A_{m'\times n'}$ of column rank $r$, where $m',n'<m$ and $n'-r=m-m'>0$.
\end{proof}

\indent \emph{Remark:} From the second condition in the previous section, we can deduce a necessary condition $n'-r\leq m-m'$ (otherwise the corresponding $n'$ columns of $G_{m\times n}$ will form a rank deficient matrix---a contradiction). Thus, for an IDA that adopts an arbitrary \mbox{non-systematic} erasure code, its confidentiality may fall into weak confidentiality.

\section{Constructing IDAs with Strong Confidentiality}\label{section-strong-conf}
\indent For an Information Dispersal Algorithm (IDA), to achieve strong confidentiality, we explore a sufficient and feasible condition as follows:
\begin{theorem}\label{th-strong}
An IDA has strong confidentiality if the adopted erasure code meets the following condition: Any square submatrix of its generator matrix $G_{m\times n}$ is nonsingular.
\end{theorem}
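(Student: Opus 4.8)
The plan is to derive Theorem~\ref{th-strong} as a consequence of Theorem~\ref{th-weak}. By the definitions in Section~\ref{section-intro}, weak confidentiality and strong confidentiality are complementary: either some part of $F$ is explicitly reconstructible from fewer than $m$ pieces, or none is. Hence an IDA that does \emph{not} have weak confidentiality automatically has strong confidentiality, and it suffices to show that, under the hypothesis that every square submatrix of $G_{m\times n}$ is nonsingular, the generator matrix $G_{m\times n}$ contains no submatrix $A_{m'\times n'}$ of column rank $r$ with $m',n'<m$ and $n'-r=m-m'>0$; by Theorem~\ref{th-weak}, the absence of such a submatrix rules out weak confidentiality.

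So I would argue by contradiction: assume such a submatrix $A_{m'\times n'}$ exists. The relation $n'-r=m-m'$ gives $r=m'+n'-m$. Since $n'<m$ we get $r=m'+n'-m<m'$, and since $m'<m$ we get $r=m'+n'-m<n'$; thus $r<\min\{m',n'\}$, so in particular $r+1\le m'$ and $r+1\le n'$. Consequently I may select $r+1$ of the rows and $r+1$ of the columns of $A_{m'\times n'}$ to form an $(r+1)\times(r+1)$ square submatrix $A'$ of $A_{m'\times n'}$, which is simultaneously a square submatrix of $G_{m\times n}$.

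Next I would invoke the elementary fact that the (column) rank of $A_{m'\times n'}$ equals the largest order of a nonsingular square submatrix of $A_{m'\times n'}$. Since that rank is $r<r+1$, every $(r+1)\times(r+1)$ submatrix of $A_{m'\times n'}$ — in particular $A'$ — is singular. But $A'$ is a square submatrix of $G_{m\times n}$, contradicting the hypothesis that all square submatrices of $G_{m\times n}$ are nonsingular. Therefore no such $A_{m'\times n'}$ exists, the IDA does not have weak confidentiality by Theorem~\ref{th-weak}, and hence it has strong confidentiality.

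I expect the only genuinely delicate point to be the logical reduction in the first paragraph — that ``not weakly confidential'' is exactly ``strongly confidential'' — which relies on reading the informal definitions of the two confidentiality levels as a dichotomy rather than on any further coding-theoretic fact; everything after that is a short rank/square-submatrix computation. One should separately note (in the discussion accompanying the theorem rather than in the proof) that the hypothesis is non-vacuous, e.g. Cauchy matrices satisfy it, which is what makes the condition ``feasible''.
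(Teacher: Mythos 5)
Your proposal is correct and follows essentially the same route as the paper: reduce to Theorem~\ref{th-weak} via the weak/strong dichotomy, then derive a contradiction by exhibiting a singular square submatrix inside the rank-deficient $A_{m'\times n'}$. Your version is in fact slightly more careful than the paper's (you make the dichotomy explicit and work with an $(r+1)\times(r+1)$ submatrix rather than the paper's $\min(m',n')\times\min(m',n')$ one), but the underlying argument is the same.
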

\begin{proof}
We prove this theorem by contradiction as follows: In this case, suppose this IDA has weak confidentiality. According to Theorem~\ref{th-weak}, in $G_{m\times n}$, there is a submatrix $A_{m'\times n'}$ of column rank $r$, where $m',n'<m$ and $n'-r=m-m'>0$. Then, according to the proof of Theorem~\ref{th-weak}, $A_{m'\times n'}$ is rank deficient. Thus, in $A_{m'\times n'}$, any $\min(m',n')\times \min(m',n')$ square submatrix is singular---a contradiction! Therefore, this IDA has strong confidentiality.
\end{proof}

\indent In Rabin's IDA \cite{IDA}, the corresponding generator matrix is a Cauchy matrix, in which any square submatrix is nonsingular. Thus, Rabin's IDA has strong confidentiality.\\
\indent Inspired by the work in \cite{square-submatrix}, we now present an effective way to construct an IDA with strong confidentiality from an arbitrary \mbox{$m$-of-$(m+n)$} erasure code as follows:
\begin{enumerate}
  \item Choose an arbitrary \mbox{$m$-of-$(m+n)$} erasure code, whose generator matrix is $G_{m\times (m+n)}=\left(C_{m\times m}|D_{m\times n}\right)$;
  \item Construct an IDA that adopts an \mbox{$m$-of-$n$} erasure code whose generator matrix is $C_{m\times m}^{-1}\cdot D_{m\times n}$.
\end{enumerate}

\indent It is easy to verify that the IDA constructed above has strong confidentiality as follows:
\begin{enumerate}
  \item In the case where $C_{m\times m}$ is an $m\times m$ identity matrix, the chosen \mbox{$m$-of-$(m+n)$} erasure code is a systematic erasure code. Then, according to the nature of a  systematic \mbox{$m$-of-$(m+n)$} erasure code \cite{EEC}, any square submatrix of $D_{m\times n}=C_{m\times m}^{-1}\cdot D_{m\times n}$ is nonsingular. Thus, according to Theorem~\ref{th-strong}, the constructed IDA has strong confidentiality.
  \item In the case where $C_{m\times m}$ is not an $m\times m$ identity matrix, the chosen \mbox{$m$-of-$(m+n)$} erasure code is a \mbox{non-systematic} erasure code. Then, $\left(I_{m\times m}|C_{m\times m}^{-1}\cdot D_{m\times n}\right)$ is the generator matrix of the equivalent systematic \mbox{$m$-of-$(m+n)$} erasure code. So, any square submatrix of $C_{m\times m}^{-1}\cdot D_{m\times n}$ is nonsingular. Thus, according to Theorem~\ref{th-strong}, the constructed IDA also has strong confidentiality.
\end{enumerate}

\indent \emph{Example:} We construct an IDA with strong confidentiality from a Reed-Solomon code \cite{RS}, whose generator matrix is a Vandermonde matrix. From what we have discussed above, we first choose a \mbox{$m$-of-$(m+n)$} Reed-Solomon code with generator matrix
\begin{equation}
    G_{\mbox{RS}}=
    \left(
     \begin{array}{cccc}
         a_1^0 & a_2^0 & \cdots & a_{m+n}^0 \\
         a_1^1 & a_2^1 & \cdots & a_{m+n}^1 \\
         \vdots & \vdots & \ddots & \vdots \\
         a_1^{m-1} & a_2^{m-1} & \cdots & a_{m+n}^{m-1} \\
     \end{array}
\right),
\end{equation}
where $a_1,a_2,\cdots,a_{m+n}$ are distinct. Then, an IDA with strong confidentiality can be reconstructed, in which the corresponding generator matrix is
\begin{align}
G_{\mbox{IDA}}=&
\left(
     \begin{array}{cccc}
         a_1^0 & a_2^0 & \cdots & a_{m}^0 \\
         a_1^1 & a_2^1 & \cdots & a_{m}^1 \\
         \vdots & \vdots & \ddots & \vdots \\
         a_1^{m-1} & a_2^{m-1} & \cdots & a_{m}^{m-1} \\
     \end{array}
\right)^{-1}
\times \nonumber \\
&\left(
     \begin{array}{cccc}
         a_{m+1}^0 & a_{m+2}^0 & \cdots & a_{m+n}^0 \\
         a_{m+1}^1 & a_{m+2}^1 & \cdots & a_{m+n}^1 \\
         \vdots & \vdots & \ddots & \vdots \\
         a_{m+1}^{m-1} & a_{m+2}^{m-1} & \cdots & a_{m+n}^{m-1} \\
     \end{array}
\right).
\end{align}
From the comparison results  in \cite{Vandermonde}, we can deduce that the computation complexity of this IDA is comparable to or sometimes even lower than that of Rabin's IDA. \\
\indent \emph{Remark:} Besides Cauchy matrices, Vandermonde matrices were also suggested for the generator matrices of IDAs in Rabin's seminal paper \cite[Page 339]{IDA}. However, a Vandermonde matrix defined over a finite field may contain singular square submatrices \cite[Page 323, Problem. (7)]{EEC}. Then, an IDA whose erasure code is a Reed-Solomon code defined over a finite field may not meet the condition in Theorem~\ref{th-strong} and thus may have weak confidentiality. Luckily, in the literature, when Rabin's IDA is mentioned, it always refers to that constructed based on a Cauchy matrix.

\section{Conclusions}\label{section-conclusion}
\indent \emph{Information Dispersal Algorithms (IDAs)} \cite{IDA} have been widely applied to reliable and secure storage and transmission of data files in distributed systems. This paper made a systematic study on the \emph{confidentiality} of IDAs and its connection with the adopted erasure codes \cite{EEC}. Specially, this paper studied the confidentiality of IDAs from the view of practical security. This paper defined and discussed two levels of confidentiality: \emph{weak confidentiality} (in the case where some parts of the original file can be reconstructed explicitly from fewer than the threshold number of pieces) and \emph{strong confidentiality} (in the case where nothing of the original file can be reconstructed explicitly from fewer than the threshold number of pieces). This paper showed which kind of IDAs have weak confidentiality and how an eavesdropper can reconstruct some segments of the original file explicitly from fewer than the threshold number of pieces in the case of weak confidentiality (see Theorem~\ref{th-weak}). It was noticed that for an IDA that adopts an arbitrary \mbox{non-systematic} erasure code, its confidentiality may fall into weak confidentiality. To achieve strong confidentiality, this paper explored a sufficient and feasible condition for an IDA (see Theorem~\ref{th-strong}). It was showed that Rabin's IDA \cite{IDA} has strong confidentiality. At the same time, this paper presented an effective way to construct an IDA with strong confidentiality. Then, as an example, this paper constructed an IDA with strong confidentiality from a Reed-Solomon code \cite{RS}, the computation complexity of which is comparable to or sometimes even lower than that of Rabin's IDA.\\
\indent The key message we want to deliver through this paper is that \emph{an arbitrary \mbox{non-systematic} erasure code is not enough for strong confidentiality in an IDA}. When referring to the confidentiality of an IDA in a practical application, we should keep this point in mind!


\bibliographystyle{IEEEtran}
\bibliography{StrongIDA_Li}

\begin{IEEEbiography}[{\includegraphics[width=1in,height=1.25in,clip,keepaspectratio]{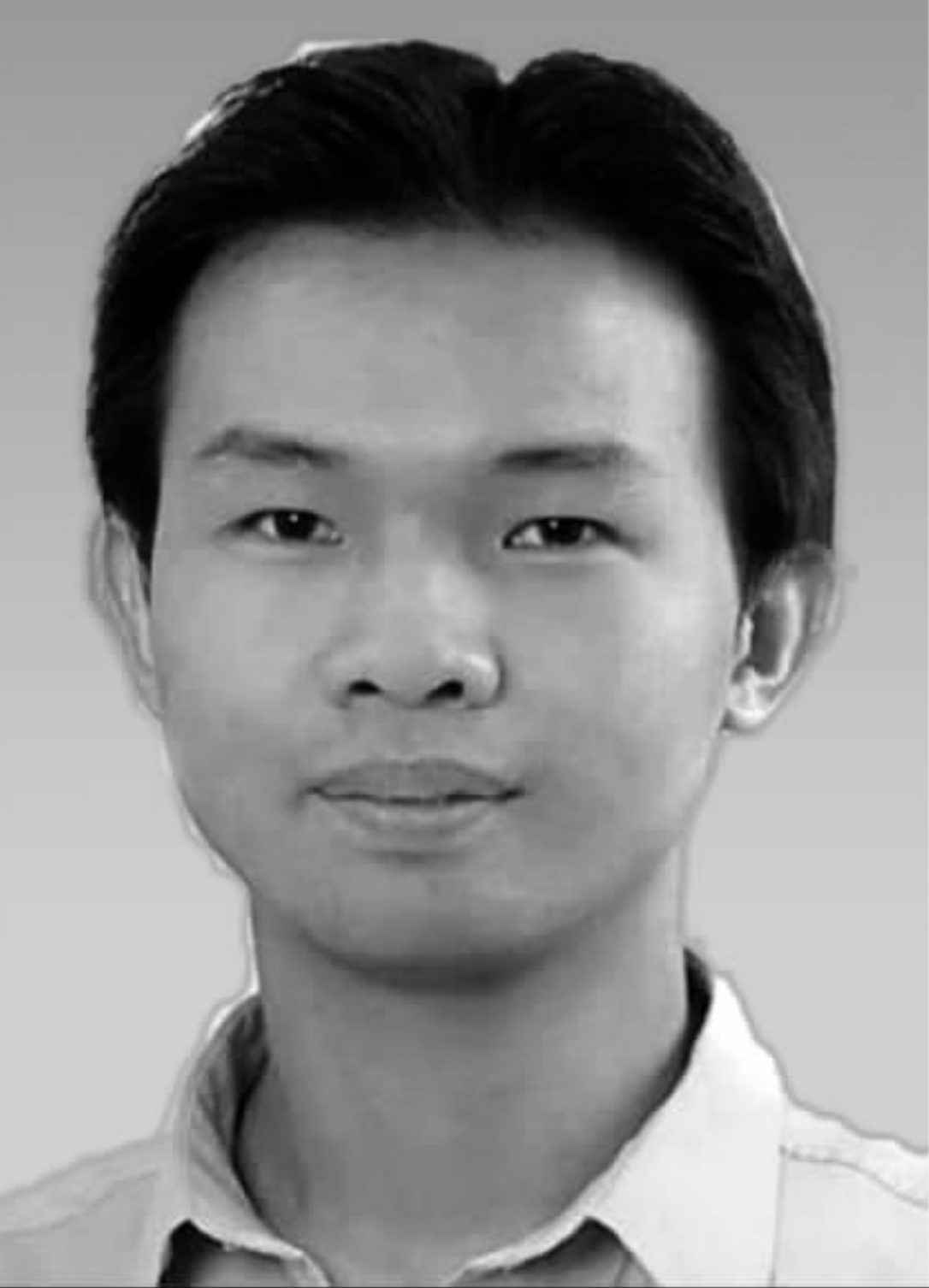}}]{Mingqiang Li}
received a Ph.D. degree (with honor) in Computer Science from Tsinghua University in July, 2011. He also received a B.S. degree in Mathematics from the University of Electronic Science and Technology of China in July, 2006. He worked as a Staff Researcher in the IBM China Research Laboratory from July, 2011 to February, 2013. He is now a Postdoctoral Fellow in the Department of Computer Science and Engineering, The Chinese University of Hong Kong. His current research interests include coding theory, storage systems, data security, data compression, cloud infrastructure, distributed systems, wireless networking, and network economics.
\end{IEEEbiography}

\end{document}